\newcommand{\eqn}[1]{\begin{equation} #1 \end{equation}}
\newcommand{\eqns}[1]{\begin{equation*} #1 \end{equation*}}
\let\oldalpha\alpha
\renewcommand{\alpha}{\bm{\oldalpha}}
\let\oldbeta\beta
\renewcommand{\beta}{\bm{\oldbeta}}
\let\oldgamma\gamma
\renewcommand{\gamma}{\bm{\oldgamma}}
\let\oldmu\mu
\renewcommand{\mu}{\bm{\oldmu}}
\let\oldOmega\Omega
\renewcommand{\Omega}{\bm{\oldOmega}}
\let\oldphi\phi
\renewcommand{\phi}{\bm{\oldphi}}
\let\oldSigma\Sigma
\renewcommand{\Sigma}{\bm{\oldSigma}}
\let\oldzeta\xi
\renewcommand{\zeta}{\bm{\oldzeta}}
\renewcommand{\b}{\bm{b}}
\newcommand{\B}{\bm{B}}
\newcommand{\f}{\bm{f}}
\newcommand{\F}{\bm{F}}
\newcommand{\g}{\bm{g}}
\newcommand{\G}{\bm{G}}
\newcommand{\HU}{\bm{H}_U}
\newcommand{\Hy}{\bm{H}_{\bm{y}}}
\newcommand{\J}{\bm{J}}
\newcommand{\lm}{\lambda_m}
\newcommand{\lM}{\lambda_M}
\newcommand{\phiF}{\phi_{\S}}
\newcommand{\phix}{\phi_{\X}}
\newcommand{\omegab}{\bm{\omega}}
\newcommand{\Q}{\bm{Q}}
\renewcommand{\r}{\bm{r}}
\newcommand{\R}{\mathbb{R}}
\newcommand{\s}{\bm{s}}
\renewcommand{\S}{\bm{S}}
\newcommand{\Us}{\bm{U}_{\bm{s}}}
\newcommand{\x}{\bm{x}}
\newcommand{\n}{\bm{n}}
\renewcommand{\v}{\bm{v}}
\newcommand{\X}{\bm{X}}
\newcommand{\y}{\bm{y}}
\newcommand{\yx}{\bm{y}_{\z}}
\newcommand{\z}{\bm{z}}
\DeclareMathOperator{\tr}{\mathrm{tr}}
\DeclareMathOperator{\diag}{\mathrm{diag}}
\DeclareMathOperator{\cov}{\mathrm{cov}}
\DeclareMathOperator{\corr}{\mathrm{corr}}
\DeclareMathOperator{\id}{\mathbf{I}}
\newtheorem{thm}{Theorem}
\newtheorem{cor}{Corollary}
\begin{document}

\title{UNCERTAINTY QUANTIFICATION AND ANALYSIS OF DYNAMICAL SYSTEMS WITH INVARIANTS}
\author{Anant A. Joshi\thanks{Undergraduate Student, Department of Mechanical Engineering, Indian Institute of Technology Bombay, India.}  
\ and Kamesh Subbarao\thanks{Associate Professor, Department of Mechanical and Aerospace Engineering, The University of Texas at Arlington, USA.}
}
\maketitle{}

\begin{abstract}
This paper considers uncertainty quantification in systems perturbed by stochastic disturbances, in particular, Gaussian white noise. The main focus of this work is on describing the time evolution of statistical moments of certain invariants (for instance total energy and magnitude of angular momentum) for such systems. A first case study for the attitude dynamics of a rigid body is presented where it is shown that these techniques offer a closed form representation of the evolution of the first and second moments of the kinetic energy of the resulting stochastic dynamical system. A second case study of a two body problem is presented in which bounds on the first and second moments of the angular momentum are presented.
\end{abstract}

\section{Introduction}

Uncertainty quantification deals with analysis of how the probability distribution of the states of a system changes with time. Various fields of sciences have employed uncertainty quantification techniques \cite{fp-astro,fp-physics,fp-chem1,fp-chem2,fp-chem3} in addition to engineering fields astrodynamics, dynamical systems and estimation. The Fokker-Planck equation\cite{risken,junkins-optest} allows us to quantify this temporal variation; however being a partial differential equation in both time and states, it is non-trivial to solve. Attempts at simplifying the partial differential equation by considering stationary conditions have been madea as in Ref.~[\citenum{fuller-fpham}].   A commonly employed approach to quantify uncertainty is to approximate the probability density function (pdf) of the state of a dynamical system as a sum of Gaussians. In a typical estimation problem, the square of the difference between the estimate of the pdf and the true predicted pdf \cite{singla-gmm} is minimized as in Gaussian sum filtering using weights that change over time (adaptive weights) \cite{singla-gsf} as opposed to keeping them constant \cite{sorenson-1,sorenson-2} between successive measurements. Further, the number of components in the Gaussian mixture too can be varied over time \cite{singla-split}. Gaussian approximation can also be propagated in time using Taylor series expansion of the dynamical equation governing the system \cite{park-nlg}. Additionally, uncertainty quantification for systems that evolve on manifolds has also been an area of interest \cite{lee-1,kdm-1,kdm-2} and has found uses in estimation\cite{lee-2,sanyal-1}. 

In this paper we focus on certain invariants of dynamical systems defined as quantities which do not change over time. For instance the total mechanical energy of an ideal autonomous spring mass system is an invariant. Studying invariants is useful as they give us information about the bounds on the states of a system, and being real valued quantities they are easy to manipulate. Controller design using Lyapunov method uses exactly this principle - to introduce such a control that will drive a positive definite invariant for the original system to zero. For the two body orbit dynamics problem, the invariants total energy and angular momentum, gives us details about the orbit of the orbiting body. For a torque free rigid body, kinetic energy and magnitude of angular momentum give us details about the stability of the body when analysed using polhode plots\cite{wiesel-polhode,hughes-polhode}. Given such invariants in a dynamical system, we investigate the effect of random noise on the dynamical systems and thereby the invariants. 

For Hamiltonian systems without random perturbations, the pdf at any time can be computed if it is known at some prior time instant, and this approach can also be used to approximate the state as a Gaussian random variable \cite{park-nlg}. For Hamiltonian systems with random perturbations, solution to the Fokker Planck equation exists \cite{fuller-fpham}, but under stationary conditions. In this work we will quantify the temporal evolution of first and second statistical moments of scalar invariants for dynamical systems perturbed by Gaussian white noise. We will make use of the statistical properties of Gaussian white noise. We will begin with a brief overview of probability and Brownian motion followed by a rigorous description of invariants and the problem description. Then we will present the main theorem in this paper, and finally its applications in finding the time evolution of the first two statistical moments of the states and invariants. We will also present application of the developed theory to two case studies: first we study rigid body dynamics in which we investigate the evolution of kinetic energy of the rigid body and present interesting and useful results which are verified by numerical simulation; second we study the two body problem in which we investigate the evolution of angular momentum and provide bounds on its rate of change.

\section{Mathematical Preliminaries}

\subsection{Probability Overview}
The mean of a random variable $\X$ will be denoted by $E[\X]$. If the mean is a function of time (if for example, $\X$ is a stochastic process), mean at time $t$ will be denoted by $E[\X](t):=E[\X(t)]$ and covariance at time $t$ will be denoted as $\cov[\X](t):=E[(\X(t) - E[\X](t))(\X(t) - E[\X](t))^T]$. The correlation of $\X$ will be denoted as $\corr[\X](t) = E[\X(t)\X(t)^T]$. The covariance is related to the correlation as 
\eqn{
\cov[\X](t) = \corr[\X] - E[\X]E[\X]^T
\label{eq:covcorr}
} 

Differentiating this we have
\eqn{
\frac{d}{dt}\cov[\X] = \frac{d}{dt}\corr[\X] - (\frac{d}{dt}E[\X])E[\X]^T - E[\X](\frac{d}{dt}E[\X])^T
\label{eq:covcorrder}
}
The probability density function of $\X$ will be denoted by $p_{\X}(\cdot)$. 
The mean of a function of a random variable $\y(\X)$ is given by$ E[\y(\X)] = \int \y(\alpha)p_{\X}(\alpha)d\alpha $. 
For a multivariate Gaussian random variable $\bm{G} \in \R^g$, with probability density function $p_{\bm{G}}(\cdot)$, mean $\bm{\mu_G} \in \mathbb{R}^g$ covariance $\bm{Q_G} \in \R^{g\times g}$, the mean and covariance of a linear transformation $\bm{LG}$ of $\bm{G}$ is 

\begin{equation}
\label{eq:grvcov}
E[\bm{LG}] = \bm{L\mu_G}, \qquad \cov[\bm{LG}] = \bm{LQ_GL}^T 
\end{equation}
The moment generating function of the multivariate Gaussian is defined as $\varphi(\bm{t}) := \exp (\iota \bm{t}^T\bm{\mu_G} - \frac{1}{2}\bm{t}^T\bm{Q_G}\bm{t})$ for $\bm{t} \in \R^g$ and with $\iota:= \sqrt{-1}$. The following results will be used later. The proofs are trivial.
For a multivariate Gaussian random variable $\bm{G} \in \R^g$, $E[\bm{G}^T\bm{MG}] =  \tr(\bm{M}\corr[\bm{G}])$ for $\bm{M} \in \R^{g\times g}$.
Given two random variables $\X_1$ and $\X_2$ with probability density functions $p_{\X_1}(\cdot)$ and $p_{\X_2}(\cdot)$ respectively, the mean of $\X_1$ can be written as $E[\X_1] = \int \left(\int \alpha p_{\X_2}(\beta)d\beta\right)p_{X_1}(\alpha)d\alpha$.

\subsection{Brownian Motion Overview}
Brownian motion ($\B(t)\in \R^m, t \in \R$) is a stochastic process having the following properties (Refs.~[\citenum{evans-bm, junkins-optest}]):
\begin{enumerate}
	\item $\B(0) = 0$ almost surely
    \item {The Browninan motion has independent increments}
    \item $\B(t) - \B(s) \sim \mathcal{N}(0,(t-s)\Q)$
    \item $d\B(t) := \B(t+dt) - \B(t) \implies d\B \sim \mathcal{N}(0,\Q dt)$
\end{enumerate}
where $dt \in \R$ denotes an infinitesimal time increment here and throughout the paper.
{
We will further assume that 
\begin{enumerate}
\setcounter{enumi}{4}
\item the Brownian motion is independent of the state
\item $\Q$ is diagonal.
\end{enumerate}

Using the results in Ref.~[\citenum{GRV}] it can be observed that 
\begin{enumerate}
	\item ${E[d\B_i(t)d\B_j(t)d\B_k(t)] = 0} \qquad \forall \quad i, j, k = 1,2,3,\ldots,m$ 
    \item ${E[d\B_i(t)d\B_j(t)d\B_k(t)d\B_l(t)] = O((dt)^2)} \qquad \forall \quad i, j, k, l = 1, 2, 3,\ldots,m$ 
\end{enumerate}
}

\subsection{Gaussian white noise}
Heuristically, the time derivative of Brownian motion is referred to as white noise $\n:=\frac{d\B}{dt}$ although Brownian motion is not differentiable. When writing a stochastic differential equation, Brownian motion is often represented as white noise (See Ref.~[\citenum{evans-bm}] for more details). In this work, systems perturbed by Brownian motion will be said to be systems perturbed by Gaussian white noise interchangeably. Note however, that both point to the same object, but they are just different words used in different cases. For example, \cref{BMsys,WNsys} will represent the same physical system but have different mathematical precision of expression. 
\eqn{ \label{BMsys}
d\x(t) = d\B(t)
}
\eqn{ \label{WNsys}
\dot{\x}(t) = \n(t)
}

\subsection{Flow of a deterministic system}
Assume that the state $\x \in \R^n$ of a dynamical system is governed by the following deterministic differential equation:

\begin{equation} \label{detsys}
	\dot{\x}(t) = \f(t,\x) 
\end{equation}

$\x(t) \in \mathbb{R}^n , \f: \mathbb{R} \times \R^n \rightarrow \mathbb{R}^n$ with initial condition $\x(t_0) = \x_0$ \par

The solution flow of the system is a map $\phi_{d,\X}:\R \times \R \times \R^n \rightarrow \R^n $ such that

\begin{enumerate}
\item $\phi_{d,\X}(t_0,t_0,\x_0) = \x_0$
\item $\frac{\partial}{\partial t} \phi_{d,\X}(t,t_0,\x) = \f(t,\phi_{d,\X}(t,t_0,\x))$
\item $\phi_{d,\X}(t,t_0,\x_0) = \x(t)$
\end{enumerate}

\subsection{Description of stochastic systems}
Assume that the system in \cref{detsys} is perturbed by Gaussian white noise. The state of the resulting stochastic system is governed by the following stochastic differential equation:
\begin{equation} \label{stosys}
	d\x = \f(t,\x)dt + \g(t,\x)d\B(t) 
\end{equation}
$\x(t) \in \mathbb{R}^n , d\B(t) \in \mathbb{R}^m , \f:\mathbb{R} \times \mathbb{R}^n \rightarrow \mathbb{R}^n , \g: \mathbb{R} \times \mathbb{R}^m \rightarrow \mathbb{R}^{n \times m}$. The state at time $t$ will be a random variable, denoted by $\X(t)$.

To describe the flow of a stochastic system, we will define a function $\phi$, inspired by the first order Taylor series expansion of the flow of a deterministic system, suitably modified for our use. If $\g\equiv0$, \cref{stosys} is a deterministic system and the flow $\phi_{d,\X}$ satisfies properties of the flow of a deterministic system. Hence,
$\phi_{d,\X}(t_0+dt,t_0,\x_0)= \x_0 + \f(t_0,x_0)dt + O((dt)^2)$

Define $\phi_{\X} : \R \times \R \times \R ^n \times \R ^m \rightarrow \R^n$ to satisfy 

\begin{align}
	\phi_{\X}(t_0 + dt,t_0,\alpha,\b) =& \alpha  + \f(t_0,\alpha)dt + \g(t_0,\alpha)\b + O((dt)^2) \label{flow2} \\
	\phi_{\X}(t_0,t_0,\x,\b) =& \x \label{flow2-2}
\end{align}

with $\b$ defined for readibility as $\b := d\B(t)$. Note that $\b \sim N(0,\Q(t)dt)$ is a random vector. It denotes realization of the Brownian motion.

In the rest of the document, the flow will only be expanded to first order in $dt$ by dropping the $O((dt)^2)$ terms since they will vanish on taking $\lim_{dt \to 0}$. For more literature on stochastic flows, the reader is referred to Refs.~[\citenum{sflow1, sflow2}].

$p_{\X}(t,\alpha)$ will refer to the probability density of the state $\X$ which is also a function of time $t$. In other words, $p_{\X}(t,\alpha)d\alpha$ is the probability that $\X \in (\alpha,\alpha + d\alpha)$ at time $t$. $p_{\b}(\cdot)$ applies similarly to $\b$.

\section{Problem Description}

\noindent \textit{Invariant for deterministic system:}
For a deterministic system invariants are quantities which do not change along the flow of the system. In the present context, the invariants are functions of the state of the system. For example, for a spring-mass system, the invariant (total energy) is a function of the state vector (position, and velocity). We shall use the notation $\S$ to capture all those variables of which the invariant is a function of. 
\eqn{
\dot{\s}(t) = \F(t,\s) 
\label{detsysinv}
} 
$\s(t) \in \mathbb{R}^q , \F: \mathbb{R} \times  \mathbb{R}^q \rightarrow \mathbb{R}^q$ will denote the dynamics of the variables comprising $\S$ for the system in \cref{detsys} and $\phi_{d,\S}$ the flow of the flow of system in \cref{detsysinv}. We shall use the notation $U:\R^q\rightarrow \R $ to denote scalar invariants for a system. For an invariant $U(\s)$ of the system in \cref{detsys} the dynamics of the invariant yields

\eqn{ \label{eq:invprop}
	\frac{dU}{dt}=0\implies (\Us(\alpha))^T \F(\alpha) = 0 
}

for any state vector $\alpha$, where $\Us(\zeta) := \dfrac{dU}{d\s}\bigg\rvert _ {\s = \zeta}$

If the system in \cref{detsys} changes to \cref{stosys} then the dynamics governing the invariant will also change, which we will denote by 

\eqn{
d\s = \F(t,\s)dt + \G(t,\s)d\B(t) 
\label{stosysinv}
}

The state governing the invariant at time $t$ will be a random variable denoted by $\S(t)$.
{
Similar to \cref{flow2,flow2-2}, $\phiF(t+dt,t,\zeta,\b)$ will denote the flow of \cref{stosysinv}. $p_{\S}(t,\cdot)$ will denote the probability density of $\S$.
}{We address the problem of quantifying the first and second statistical moments of $U(\S(t))$}.

\section{Solution Approach}

\begin{thm} 
\label{th:T1}
Given a deterministic function $\y$ of the state $\X$, define $\displaystyle \yx(\alpha) := \dfrac{dy}{d\z}\bigg\rvert _ {\z = \alpha}$ and $\displaystyle \Hy(\alpha) := \frac{1}{2} {\frac{d^2 \y}{d \z d \z^T }}\bigg\rvert _ {\z = \alpha}$ 

For the system in \cref{stosys},
\begin{eqnarray}
\dfrac{d}{dt}\left({E}[\y(\X)]\right)  & = & \int \yx(\alpha)^T\f(t,\alpha) p_{\X}(t,\alpha) d\alpha \nonumber \\ 
& & + \lim_{dt \to 0} \frac{\int \int ( (\g(t,\alpha)\beta) ^T \Hy(\alpha) \g(t,\alpha)\beta p_{\b}(\beta)d\beta ) p_{\X}(t,\alpha)d\alpha}{dt} \label{eq:dEdt}
\end{eqnarray}
\end{thm}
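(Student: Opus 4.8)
The plan is to establish this as a version of Itô's lemma, obtained by expanding $\y$ along the stochastic flow and taking expectations. First I would write the time derivative as a difference quotient,
\[
\frac{d}{dt}E[\y(\X)] = \lim_{dt\to 0}\frac{E[\y(\X(t+dt))] - E[\y(\X(t))]}{dt},
\]
and substitute the first-order flow expansion from \cref{flow2}, namely $\X(t+dt) = \X(t) + \f(t,\X(t))\,dt + \g(t,\X(t))\b + O((dt)^2)$ with $\b = d\B(t)$. Introducing the increment $\delta\X := \f\,dt + \g\b$, the entire computation reduces to evaluating the expectation of $\y(\X(t)+\delta\X) - \y(\X(t))$ over both the state $\X(t)$ and the independent increment $\b$.

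The second step is a Taylor expansion of $\y$ about $\X(t)$ to second order, which by the definitions of $\yx$ and $\Hy$ reads
\[
\y(\X+\delta\X) - \y(\X) = \yx(\X)^T\delta\X + \delta\X^T\Hy(\X)\,\delta\X + R,
\]
where $R$ collects the third- and higher-order terms and $\Hy$ is symmetric. Substituting $\delta\X = \f\,dt + \g\b$ and multiplying out produces a linear piece $\yx^T\f\,dt + \yx^T\g\b$ and a quadratic piece $(\f\,dt)^T\Hy(\f\,dt) + 2(\f\,dt)^T\Hy(\g\b) + (\g\b)^T\Hy(\g\b)$. I would then take expectations term by term, using that $\b$ is independent of the state and has zero mean: this kills $E[\yx^T\g\b]$ and the cross term $E[(\f\,dt)^T\Hy(\g\b)]$, since each factors as a state-dependent quantity times $E[\b]=0$, while the pure term $(\f\,dt)^T\Hy(\f\,dt)$ is $O((dt)^2)$ and drops out after division by $dt$. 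What survives at order $dt$ is exactly $E[\yx^T\f]\,dt$ together with $E[(\g\b)^T\Hy(\g\b)]$.

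The delicate point, and the main obstacle, is the bookkeeping of orders in $dt$: because $\b\sim\mathcal{N}(0,\Q\,dt)$, the increment $\g\b$ is of size $\sqrt{dt}$ rather than $dt$, so the quadratic term $(\g\b)^T\Hy(\g\b)$ is genuinely $O(dt)$ and must be retained, while one must argue that the remainder $R$ contributes nothing in the limit. This is precisely where the stated moment properties of the Brownian increments enter: the vanishing of the third moments $E[d\B_i\,d\B_j\,d\B_k]=0$ makes the expectation of the leading cubic term of $R$ (the $(\g\b)^{\otimes 3}$ contribution) vanish identically, and the bound $E[d\B_i\,d\B_j\,d\B_k\,d\B_l]=O((dt)^2)$ forces the quartic and higher terms to vanish after dividing by $dt$. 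The remaining mixed terms of $R$ carry extra factors of $dt$ and are therefore negligible as well.

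Finally I would reassemble the two surviving contributions, rewrite each expectation as an integral against the densities $p_{\X}(t,\cdot)$ and $p_{\b}(\cdot)$, using independence to split the joint expectation into the iterated integral over $\beta$ and then over $\alpha$, divide by $dt$, and take $\lim_{dt\to 0}$ to arrive at \cref{eq:dEdt}. A fully rigorous version would additionally invoke dominated convergence to justify interchanging the limit with the integration, but the formal manipulation above captures the essential content of the argument.
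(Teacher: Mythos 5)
Your proposal is correct and follows essentially the same route as the paper's proof: expand the flow to first order in $dt$, Taylor-expand $\y$ to second order, use the zero mean and independence of $\b$ to eliminate the linear and cross terms, drop the $O((dt)^2)$ drift-squared term, and pass to the limit of the difference quotient. The only differences are presentational --- the paper packages the averaging over $\alpha$ and $\beta$ via a Dirac-delta conditional density before arriving at the same iterated integral, while you are in fact somewhat more careful than the paper in explicitly disposing of the Taylor remainder using the third- and fourth-moment properties of the Brownian increments.
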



\begin{proof}
Only for the proof of \cref{th:T1} the notation for probability density will be changed. $p_{\X_t}(\alpha) := p_{\X}(t,\alpha)$ will denote the probability density of the random variable $\X$ at the time $t$. 

We will go through the proof in two steps. First, we will find the expectation of the function of the state due to an infinitesimal increment $dt$ in time. Then, we will use the first principles definition of time derivative to complete the proof. 

In \cref{flow2}, given the value of state ($\alpha$) and value of $\b$ ($\beta$) at time $t$, the state at time $t+dt$ will be given by 
$$\gamma := \X(t+dt) = \alpha + \f(t_0,\alpha)dt + \g(t_0,\alpha)\beta$$ 

Hence if we know the value of $\X(t)$ and $\b$ we can find the value of $\X(t + dt)$ exactly. If we (in an intuitive sense) average over the values of all $\alpha$ and $\beta$ (since both of them are random variables) we will arrive at $E[\X(t + dt)]$. This is intuitively equivalent to finding all values of $\gamma$ itself and averaging over them, since all values of $\alpha$ and $\beta$ will invariably lead to all values of $\gamma$. 

We want to cast this intuition in a probabilistic way so that we can write the expectation in terms of integrals. Therefore, written in the form of probability density function (denoting the Dirac delta function by $\delta(\cdot)$) 
\eqn{
p_{\X_{t+dt}|\X_t,\b}(\gamma|\alpha,\beta) = \delta(\gamma - (\alpha + \f(t,\alpha)dt + \g(t,\alpha)\beta))
\label{pdelta}
}  

Now that we know the value of the next state given the previous state and the realisation of Brownian motion, or equivalently the conditional probability density, we will try to find the probability density function of $\gamma$ and integrate over all $\gamma$ to find the expectation. 

By the definition of conditional probability (dropping arguments of functions for readability whenever required),
\begin{align}
p_{{\X_{t+dt},\X_t,\b}} =& p_{\X_{t+dt}|\X_t,\b} \ p_{\X_t,\b} 
\end{align} 

Since the Brownian motion is assumed independent of the state,
\eqn{
p_{{\X_{t+dt},\X_t,\b}} = p_{\X_{t+dt}|\X_t,\b} \ p_{\X_t} \ p_{\b} 
\label{pXXb}
}

From the definition of marginal probability density and expectation value,
\eqn{
p_{\X_{t+dt}}(\gamma) = \int \int p_{{\X_{t+dt},\X_t,\b}} (\gamma,\alpha,\beta) d\alpha d\beta
\label{pmarg}
}
\eqn{
E[\X](t+dt) = E[\X_{t+dt}] = \int \int \int \gamma p_{\X_{t+dt},\X_t,\b} d\gamma d\alpha d\beta
}

Substituting \cref{pXXb,pdelta}, recalling the properties of $\delta(\cdot)$ and integrating in $\gamma$ we have, 
\eqn{E[\X](t+dt) = \int \int (\alpha + \f(t,\alpha)dt + \g(t,\alpha)\beta)p_{\X_t} p_{\b} d\beta d\alpha}

Using the definition of $\phix$ in \cref{flow2}
\eqn{E[\X](t+dt) = \int \int \phix(t+dt,t,\alpha,\beta)p_{\X_t} p_{\b} d\beta d\alpha} 

This conforms with the intuitive idea of averaging over all possible values of $\alpha$ and $\beta$. We want to do the same for any function $y(X)$. Recalling, 
\eqn{ \label{eq:Ey1}
E[\y(\X)](t) = \int \y(\alpha)p_{\X}(t,\alpha)d\alpha
}

and
\eqn{p_{\y(\X_{t+dt})|\X_t,\b}(\gamma|\alpha,\beta) = \delta(\gamma - \y(\alpha + \f(t,\alpha)dt + \g(t,\alpha)\beta))}

we similarly obtain
\eqn{ \label{eq:Ey2}
E[\y(\X)](t+dt) = \int \int \y(\phix(t+dt,t,\alpha,\beta))p_{\X_t} p_{\b}d\beta d\alpha
}

We will now proceed to find the derivative having found the value at $t + dt$. Using multivariate Taylor series expansion upto second order along the dynamics of \cref{stosys} and recalling that $\y(\phix(0,t,\alpha,\beta)) = \y(\alpha)$ we have,
\eqn{ \label{ysecondorder}
	\y(\phix(t+dt,t,\alpha,\beta) = \y(\alpha) + (\yx(\alpha))^T(d\x) + d\x ^T \Hy(\alpha) d\x
}

Subtracting \cref{eq:Ey1} from \cref{eq:Ey2} and dropping arguments of functions we have
\eqn{
	dE[\y(\X)] = \int \int( (\yx)^T(\f dt + \g\beta) + (\f dt + \g\beta)^T \Hy(\f dt + \g\beta) )p_{\b}d\beta p_{\X_t}d\alpha
}

On expanding the integrand, the following terms will integrate to zero:

\begin{enumerate}
\item $\yx ^T \g\beta$, $\f ^T \Hy \g\beta dt$ and $(\g\beta) ^T\Hy \f dt$ since $\b$ is a Gaussian random variable with zero mean 
\item $\f ^T \Hy \f (dt)^2$ since $O((dt)^2)$ terms are ignored 
\end{enumerate}

Finally taking $\lim_{dt \to 0}$ and recalling $\int p_{\b}(t,\beta)d\beta = 1$

\eqn{
	\dot{E}[\y(\X)] = \int \yx ^T \f p_{\X} d\alpha + \lim_{dt \to 0}\frac{\int \int (\g\beta)^T \Hy  \g\beta p_{\b}d\beta p_{\X}d\alpha}{dt}
}

This completes the proof of the theorem.

Note : the limit makes sense since the integral in the numerator is $O(dt)$ because the covariance of $\b$ is $\Q dt$.
\end{proof}

We will present two special cases of the application of \cref{th:T1} and the results are consistent with similar ones as in Ref.~[\citenum{singla-qgkf}].

\begin{cor}
\label[Corollary]{eq:meanprop}
Evolution of mean of the state: Setting $\y(\X):=\X$, $\yx = \id$ and $\Hy = \mathbf{0}$. Using \cref{th:T1} we have $\dfrac{d}{dt}\left({E}[\X]\right) = E[\f(t,\X)]$
\end{cor}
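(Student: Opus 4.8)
The plan is to obtain the result by direct substitution into \cref{th:T1}, with the one point requiring care being that $\y(\X) := \X$ is the vector-valued identity map on $\R^n$, whereas \cref{th:T1} is phrased for a scalar-valued $\y$ (its gradient $\yx$ and Hessian $\Hy$ are a vector and a matrix, and the formula's second term is a scalar quadratic form). I would therefore apply the theorem to each scalar component $y_i(\X) := X_i$ separately and reassemble the resulting $n$ scalar identities into a single vector equation at the end.

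For the $i$-th component, since $y_i(\X) = X_i$ is linear in the state, its gradient is the constant standard basis vector, $(y_i)_{\z}(\alpha) = e_i$, and its Hessian vanishes identically, $\bm{H}_{y_i}(\alpha) = \mathbf{0}$. Stacking the component gradients row-wise recovers the Jacobian of the identity map, namely $\yx = \id$, which is exactly the substitution recorded in the corollary statement. The point to verify here is simply the linearity that forces every second derivative to be zero.

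Substituting these into the formula of \cref{th:T1}, the entire second term drops out for each component because it is the $\Q\,dt$-weighted quadratic form built from $\bm{H}_{y_i} = \mathbf{0}$; consequently the $\lim_{dt \to 0}$ expression is identically zero and no limiting argument is needed. The surviving first term for component $i$ is $\int e_i^T \f(t,\alpha)\, p_{\X}(t,\alpha)\, d\alpha = \int f_i(t,\alpha)\, p_{\X}(t,\alpha)\, d\alpha = E[f_i(t,\X)]$, where the last equality is the definition of the expectation of a function of a random variable. Collecting the $n$ scalar equations $\frac{d}{dt}E[X_i] = E[f_i(t,\X)]$ into one vector identity gives $\frac{d}{dt}E[\X] = E[\f(t,\X)]$.

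The main (and essentially only) obstacle is notational rather than mathematical: one must be comfortable reading the scalar-valued \cref{th:T1} component-wise for a vector output and confirm that the Hessian term truly contributes nothing. Once linearity of the identity map is invoked to annihilate the second-order contribution, the remainder is a one-line substitution, so no estimates or limiting arguments beyond those already established in the proof of \cref{th:T1} are required.
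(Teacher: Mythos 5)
Your proposal is correct and follows essentially the same route as the paper, which simply substitutes $\yx = \id$ and $\Hy = \mathbf{0}$ into \cref{th:T1} so that the quadratic (noise) term vanishes and the first term reduces to $E[\f(t,\X)]$. Your only addition is the explicit component-wise reading of the scalar-valued theorem for the vector-valued identity map, which is a careful bookkeeping point rather than a different argument.
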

\begin{cor}
\label[Corollary]{eq:corrprop}
Evolution of correlation of the state: Setting $\y(\X) = \X\X^T$, we see $\yx(\alpha)\cdot\f(\alpha) = \alpha\f^T(\alpha) + \f(\alpha)\alpha^T$ and using \cref{th:T1} we have, \[\dfrac{d}{dt}\left({E}[\X \X^T]\right) = E[\X \f^T(\X)] + E[\f(\X)\X^T] + E[\g(\X)\Q \g^T(\X)]\]
\end{cor}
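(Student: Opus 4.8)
The plan is to apply \cref{th:T1} entrywise, since $\X\X^T$ is matrix-valued whereas the theorem is stated for a scalar-valued function $\y$. First I would fix a pair of indices $(i,j)$ and take the scalar function $y_{ij}(\alpha):=\alpha_i\alpha_j$, i.e. the $(i,j)$ entry of $\X\X^T$ written in terms of the dummy state $\alpha$, and compute the two objects the theorem requires. Its gradient $\yx(\alpha)$ has $k$-th component $\delta_{ik}\alpha_j+\delta_{jk}\alpha_i$, and its (half-)Hessian $\Hy(\alpha)$ is the constant symmetric matrix with $(k,l)$ entry $\tfrac12(\delta_{ik}\delta_{jl}+\delta_{jk}\delta_{il})$, independent of $\alpha$.

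For the drift term I would substitute this gradient into $\int \yx(\alpha)^T\f(t,\alpha)\,p_{\X}(t,\alpha)\,d\alpha$; the integrand collapses to $\alpha_j f_i(t,\alpha)+\alpha_i f_j(t,\alpha)$, whose integral against $p_{\X}$ is $E[\X_j f_i(t,\X)]+E[\X_i f_j(t,\X)]$, which is precisely the $(i,j)$ entry of $E[\f(\X)\X^T]+E[\X\f^T(\X)]$. Collecting over all $(i,j)$ reproduces the first two terms of the claim and matches the shorthand $\yx\cdot\f=\alpha\f^T+\f\alpha^T$ already recorded in the corollary statement.

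The diffusion term carries the real content. Here I would invoke the preliminary identity $E[\bm{G}^T\bm{M}\bm{G}]=\tr(\bm{M}\corr[\bm{G}])$ with $\bm{G}=\g(t,\alpha)\beta$ and $\bm{M}=\Hy$. Because $\b$ has zero mean and $\cov[\b]=\Q\,dt$, while $\g$ does not depend on $\beta$, the correlation factors as $\corr[\g\beta]=\g\,\cov[\b]\,\g^T=\g\Q\g^T\,dt$, so the inner $\beta$-integral equals $\tr\big(\Hy\,\g\Q\g^T\big)\,dt$. For the symmetrized rank-two $\Hy$ above this trace reduces to $(\g\Q\g^T)_{ij}$ by symmetry of $\g\Q\g^T$, making the numerator exactly $dt$ times $\int(\g(t,\alpha)\Q\g^T(t,\alpha))_{ij}\,p_{\X}(t,\alpha)\,d\alpha$. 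Dividing by $dt$ and taking $\lim_{dt\to0}$ then leaves the finite limit $E[(\g(\X)\Q\g^T(\X))_{ij}]$, the $(i,j)$ entry of the third term.

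The step I expect to be the crux is exactly this diffusion computation: recognizing that $\corr[\g\beta]$ is $O(dt)$ (the It\^o-type scaling forced by $\cov[\b]=\Q\,dt$) is what makes the otherwise singular-looking $1/dt$ limit finite and produces the extra $E[\g\Q\g^T]$ term absent from a naive chain rule. The remainder is bookkeeping: assembling the three entrywise identities into the matrix equation $\frac{d}{dt}E[\X\X^T]=E[\X\f^T(\X)]+E[\f(\X)\X^T]+E[\g(\X)\Q\g^T(\X)]$.
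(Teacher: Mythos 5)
Your proposal is correct and follows essentially the same route as the paper: the corollary is obtained by direct application of Theorem~\ref{th:T1} to $\y(\X)=\X\X^T$, which you simply carry out entrywise with the gradient, half-Hessian, and the $\corr[\g\b]=\g\Q\g^T\,dt$ scaling made explicit. The paper states this as a one-line substitution, so your write-up is just a more detailed version of the identical argument.
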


$\frac{d}{dt}\cov[\X] = \frac{d}{dt}\corr[\X] - (\frac{d}{dt}E[\X])E[\X]^T - E[\X](\frac{d}{dt}E[\X])^T$ which is obtained by differentiating the formula $\cov[\X] = \corr[X] - E[\X]E[\X]^T$.

\subsection{Evolution of invariant}

We will derive expressions for the rate of change of the first and second moments of an invariant $U(\s)$ with underlying dynamics given by \cref{stosysinv}. Define $\displaystyle \HU(\alpha) := \frac{1}{2} {\frac{d^2 U}{d \s d \s^T }}\bigg\rvert _ {\s = \alpha}$
{ Also define,
\begin{eqnarray}
u_1(t)       & := & E[U(\s)](t) = \int U(\zeta) p_{\S} (t,\zeta)d\zeta\label{eq:expu} \\
u_2(t)       & := & E[(U(\s))^2](t) \label{eq:expu2-1} \\
\cov[U](t) & := &  E[(U(\s) - u_1)^2](t) = (u_2(t)) - (u_1(t))^2 \label{eq:expu2-2}
\end{eqnarray}
}
The mean, correlation, and covariance of the invariant are given by eqs.~\eqref{eq:expu}, \eqref{eq:expu2-1} and \eqref{eq:expu2-2} respectively. Using \cref{th:T1} and \cref{eq:invprop} we obtain on substitution (suppressing arguments of functions to keep notation clean):

\begin{eqnarray}
\dot{u}_1(t) & = & \int \tr(\HU\G\Q\G^T) p_{\S}d\zeta \label{eq:meaninv} \\
\dot{u}_2(t) & = & \int [2U\tr(Q\G^T\HU\G) + \tr(\Us^T\G\Q\G^T\Us)]p_{\S} d\zeta \label{eq:corrinv} \\
\frac{d}{dt}(\cov[U](t)) & = & \int [2(U -u_1) \tr(Q\G^T\HU\G) + \tr(\Us^T\G\Q\G^T\Us)]p_{\S} d\zeta \label{covinv}
\end{eqnarray}

It must be kept in mind that in cases where $\G$ is the identity map,  will greatly simplify these expressions. 

\section{Analysis of Rigid Body Attitude Dynamics}
For a rigid body, the dynamics of the angular velocity of the body with respect to an inertial frame, expressed in body frame components ($\bm{\omega} \in \R^3$) is governed by \cref{rbd} where the moment of inertia is denoted by $\J= \J^T > 0 \in \R^{3 \times 3}$ and torque acting on rigid body by $\bm{\tau}\in \R^3$. If we carry out the analysis with body axes assumed to be aligned along the principal axes of inertia then $\J$ is diagonal, and \cref{rbd} simplifies to \cref{rbd1,rbd2,rbd3}.
\begin{gather}
	\J\dot{\bm{\omega}} = -\bm{\omega} \times \J\bm{\omega} + \bm{\tau} \label{rbd} \\
    \dot{\omega}_1 = \frac{J_2-J_3}{J_1}\omega_2\omega_3 + \frac{\tau_1}{J_1} =  c_1\omega_2\omega_3 + n_1 \label{rbd1}\\
    \dot{\omega}_2 = \frac{J_3-J_1}{J_2}\omega_1\omega_3 + \frac{\tau_2}{J_2} =  c_2\omega_1\omega_3 + n_2 \label{rbd2}\\
    \dot{\omega}_3 = \frac{J_1-J_2}{J_3}\omega_1\omega_2 + \frac{\tau_3}{J_3} = c_3\omega_1\omega_2 + n_3 \label{rbd3}
\end{gather}

Clearly, for the rigid body, we identify $\f(t,\alpha) = -\J^{-1}(\alpha \times \J \alpha) $ and $\g(t,\alpha)=\J^{-1}$. 

For this rigid body subject to torque free motion, there are two invariants, the kinetic energy and the norm of angular momentum.
Both the invariants are functions only of the angular velocity of the rigid body. We will look at evolution of first and second moments of the kinetic energy in the presence of stochastic torques ($\Q$ will denote the covariance of $\bm{\tau}$). Denote the mean, correlation and covariance of angular velocity by $\mu, \bm{R}$ and $\Sigma$ respectively. Assume the body axes to be aligned with the principal axes of moment of inertia. We use Corollary  \ref{eq:meanprop} and  \ref{eq:corrprop} to arrive at expressions of the expressions for $\dot{\mu}$ and $\dot{\sigma}$ (refer to Appendix A for details of derivation). 
The expressions for $\dot{\mu}$ expressed in scalar form are as follows (the expressions for $\dot{\sigma}$) are not presented in scalar form since they are not very readable):

\eqns{\dot{\oldmu_1} = c_1\oldSigma_{23} + c_1\oldmu_2\oldmu_3}
\eqns{\dot{\oldmu_2} = c_2\oldSigma_{31} + c_2\oldmu_3\oldmu_1}
\eqns{\dot{\oldmu_3} = c_3\oldSigma_{12} + c_3\oldmu_1\oldmu_2}

Alternately, the combined equations can be expressed in vector form as 
\eqn{
\dot{\mu} = -\J^{-1}\left( \mu \times \J \mu \right) + \J^{-1}{\bm{\sigma}}
\label{eq:rbmeanprop}
}
\eqn{
\dot{\Sigma} = \bm{A}\Sigma + \Sigma \bm{A}^T  + \J^{-1}\Q\J^{-1}
\label{eq:rbcovprop}
}

where $\displaystyle \bm{A} := \frac{\partial}{\partial \alpha}\f(t,\alpha)\bigg\rvert_{\alpha = \mu}$ and $\displaystyle \bm{\sigma} := \left[(J_2 - J_3)\oldSigma_{23}~\quad (J_3 - J_1)\oldSigma_{31}~\quad (J_1-J_2)\oldSigma_{12}\right]^T$.

\noindent It is to be noted that if $\Sigma$ is diagonal, then evolution of the mean of the angular velocity states for the stochastic attitude dynamics reduces to 

\eqns{\dot{\mu} = -\J^{-1}\left( \mu \times \J \mu \right)}

which has the same structure as the torque free rigid body motion. We now look at the two invariants previously mentioned.

\subsection{Rotational Kinetic Energy}

The kinetic energy $U_K$ of a rigid body is given by the expression 
\eqn{U_K(\omegab) = \frac{1}{2}\omegab^T\J\omegab \label{enrbd}}
This yields  
\eqn{(U_K)_{\omegab} = \J\omegab \label{enrbd1}}
and \eqn{H_{U_K} = \frac{\J}{2} \label{enrbd2}}

\noindent \underline{Evolution of the Mean:} 
\begin{equation}
	\oldmu_K(t) := E[U_K(\omegab)](t)
\end{equation}
Using eq. \eqref{eq:meaninv} and \cref{enrbd2} 
\eqn{
\dot{\oldmu}_K(t) = \frac{1}{2}\tr(\J^{-1}\Q) 
\label{eq:rbenmean}
}

\noindent \underline{Evolution of Correlation:}
\eqn{R_K(t):=E[(U_K(\omegab))^2](t)}
Using eq. \eqref{eq:corrinv} and \cref{enrbd1,enrbd2}
\eqn{ \dot{R}_K(t) = \oldmu_K(t)\tr(\J\Q) + \tr((\Sigma + \mu\mu^T)\Q)}

\noindent \underline{Evolution of Covariance:}
\eqn{
\oldSigma_K = \cov[U_K(\omegab)](t) = R_K(t) - \oldmu_K^2(t)
}

Using eq. \eqref{covinv}, \eqref{enrbd1} and \eqref{enrbd2} we have
\begin{equation}
\label{eq:rbencov}
\dot{\oldSigma}_K = \frac{d}{dt}\cov[U_K(\omegab)] =  \tr((\Sigma + \mu\mu^T)\Q)
\end{equation}

Refer Appendix B for details of derivations. 
\newpage
\noindent \underline{Summary:}

Thus the governing equations for the first and second moments of the states and the corresponding invariant (rotational kinetic energy) can be summarized as follows:
\begin{eqnarray*}
\dot{\mu} & = & -\J^{-1}\left( \mu \times \J \mu \right) + \J^{-1}{\bm{\sigma}} \\
\dot{\Sigma} & = & \bm{A}\Sigma + \Sigma \bm{A}^T  + \J^{-1}\Q\J^{-1} \\
\dot{\oldmu}_K(t) & = & \frac{1}{2}\tr(\J^{-1}\Q) \\
\dot{\oldSigma}_K & = & \tr((\Sigma + \mu\mu^T)\Q)
\end{eqnarray*}
where $\displaystyle \bm{A} := \frac{\partial}{\partial \alpha}\f(t,\alpha)\bigg\rvert_{\alpha = \mu}$ and $\displaystyle \bm{\sigma} := \left[(J_2 - J_3)\oldSigma_{23}~\quad (J_3 - J_1)\oldSigma_{31}~\quad (J_1-J_2)\oldSigma_{12}\right]^T$.

\subsection{Numerical Verification}
We observe that it is difficult to find an analytical solution for the mean and covariance propagation of angular velocity (\cref{eq:rbmeanprop,eq:rbcovprop}) in the general case. However, the structure of \cref{eq:rbmeanprop,eq:rbcovprop} presents a simple coupled system of first order ordinary vector-matrix differential equations that can be numerically integrated. Using the data from this numerical solution we can numerically evaluate \cref{eq:rbencov}. The evolution predicted by \cref{eq:rbenmean,eq:rbencov} agrees closely with that obtained through Monte Carlo simulations. Simulation results are presented in \cref{fig:KEHistory,fig:statesHistory}. The growth in the simulated $3\sigma$ bounds is expected since there is uncertainty in the external torques as well as the initial angular velocities. The simulation parameters in SI units are as follows:
\begin{enumerate}
\item ${\bm Q} = \diag(0.005, 0.002, 0.003)~\mathrm{N^2 m^2}$
\item ${\bm J} = \diag(10,12,14)~\mathrm{kg m^2}$
\item Simulation time step for numerical integration : 0.1~s 
\item Total simulation time : 100 s
\item Initial mean and covariance of angular velocity assuming Gaussian distribution: \\ $[0.02,0.02,0.02]~\mathrm{rad/s}$ and $\diag(0.00002,0.00002,0.00002)~\mathrm{rad^2/s^2}$ respectively
\item Number of Monte Carlo sample points : $10000$
\end{enumerate}
 
  
\begin{figure}[h!]
\centering
    \includegraphics[width=0.8\textwidth]{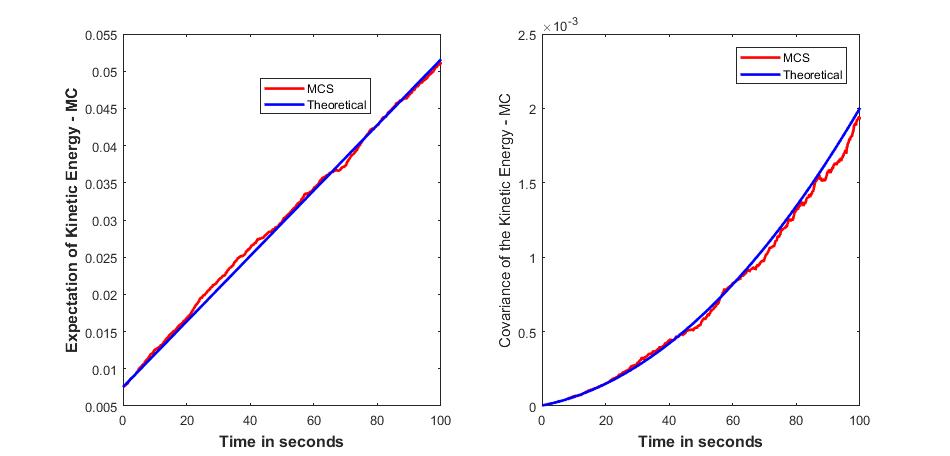}
    \caption{Mean and Covariance of Kinetic energy}
    \label{fig:KEHistory}
\end{figure}

\begin{figure}[h!]
\centering
    \includegraphics[width=0.8\textwidth]{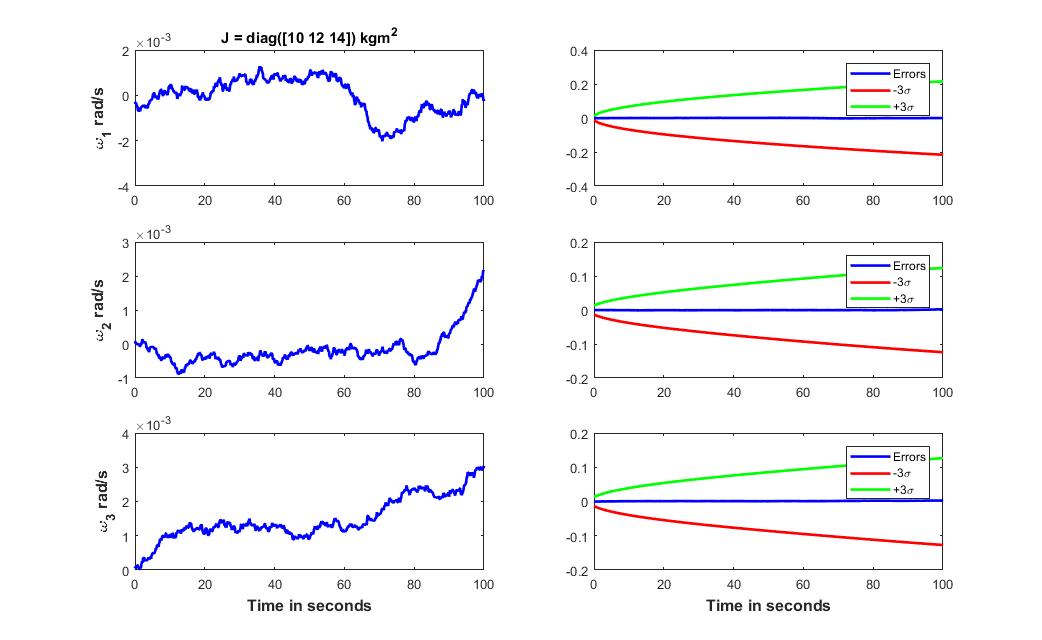}
    \caption{Angular Velocity Histories}
    \label{fig:statesHistory}
\end{figure}

\section{Analysis of Two Body Problem}

For the two body problem, the dynamics of the state (relative position $\r \in \R^3$ and relative velocity $\dot{\r} \in \R^3$)  is governed by \cref{tbd} where the $\mu$ is the graviational constant and perturbation accelerations are $\bm{\eta}\in \R^3$. 
\eqn{
\frac{d}{dt} \begin{bmatrix} \r \\ \dot{\r} \end{bmatrix} = \begin{bmatrix} \dot{\r} \\ -\frac{\oldmu}{r^3} \r \end{bmatrix} + \begin{bmatrix} \bm{0}_{3 \times 3} \\ \bm{I}_{3 \times 3} \end{bmatrix} \bm{\eta}
\label{tbd}
}
Clearly, as with the rigid body, we identify $\F(t,(\r,\dot{\r})) = \begin{bmatrix} \dot{\r} \\ -\frac{\oldmu}{r^3} \r \end{bmatrix} $ and $\G(t,(\r,\dot{\r}))=\begin{bmatrix} \bm{0}_{3 \times 3} \\ \bm{I}_{3 \times 3} \end{bmatrix}$. 
For the two body problem without perturbation, a few of the invariants are the semi-major axis of the orbit, angle of inclination, orbit eccentricity, longitude of right ascension, the argument of periapsis, and the time since periapsis passage. In this paper, we will study the specific angular momentum, and the total specific mechanical energy. \textbf{Note}, the total specific mechanical energy, the specific angular momentum and the orbit eccentricity are all related so only two out of these three can be treated as being independent.
The invariants are functions of $\r$ and $\dot{\r}$. We will look at evolution of the first and second moments of the square of the magnitude of angular momentum in the presence of stochastic torques ($\Q$ will denote the covariance of $\bm{\eta}$). 

\subsection{Specific Angular Momentum}

The specific angular momentum is defined as\cite{curtis}
\eqn{
    \bm{h}(\r,\dot{\r}) := \r \times \dot{\r}
}

Angular momentum is a vector invariant for the system in \cref{tbd} \cite{curtis}. The square of the Euclidean norm of the specific angular momentum ($h$) is considered as a scalar invariant for the system.
\eqn{
    \begin{aligned}
    {h}(\r,\dot{\r}) :=& ||\bm{h}||^2 \\
    =& (\r \times \dot{\r}) \cdot (\r \times \dot{\r})\\
    =& ||\r||^2 ||\dot{\r}||^2 - (\r \cdot \dot{\r})^2
    \end{aligned}
}

Simple computations yield:
\eqn{
    \frac{\partial h}{\partial \r} = 2||\dot{\r}||^2\r - 2(\dot{\r}\cdot\r)\dot{\r}
}
\eqn{
    \frac{\partial h}{\partial \dot{\r}} = 2||{\r}||^2\dot{\r} - 2(\dot{\r}\cdot\r){\r}
}

\eqn{
   \bm{H}_h = \begin{bmatrix} ||\dot{\r}||^2\bm{I}_3 -\dot{\r}\dot{\r}^T & 2\r \dot{\r}^T - \dot{\r}{\r}^T - (\r \cdot \dot{\r})\bm{I}_3 \\ 2\r \dot{\r}^T - \dot{\r}{\r}^T - (\r \cdot \dot{\r})\bm{I}_3 & ||\r||^2\bm{I}_3 - \r \r^T \end{bmatrix}
}

\eqn{
\G^T\bm{H}_h\G\Q = ||\r||^2\Q - \r \r^T\Q
\label{eq:htbdhess}
}

\noindent \underline{Evolution of Mean:}

Given
\begin{equation}
	\oldmu_h(t) := E[h(\r,\dot{\r})](t)
\end{equation}
and defining $\lambda_m$ and $\lambda_M$ to be the minimum and maximum eigenvalue of $\Q$ respectively. 
Then,
\eqn{
E[||\r||^2](\tr(Q) - \lM(\Q)) \le \dot{\oldmu}_h \le E[||\r||^2](\tr(Q) - \lm)
}

The derivation of these inequalities is as follows. Using eq. \eqref{eq:meaninv} and \eqref{eq:htbdhess} 
\eqn{
\dot{\oldmu}_h(t) = E[||\r||^2\tr(\Q) - \r^T\Q\r]
}
which can be expressed in terms of second moments of $\r$ using Corollary \ref{eq:corrprop}. 
Since $\Q$ is symmetric, $\lm||\r||^2 \le \r^T\Q\r \le \lM||\r||^2$.
\eqn{
||\r||^2\tr(\Q) - \r^T\Q\r \le ||\r||^2\tr(\Q) - ||\r||^2\lm \ \forall \r
}
\eqn{
\begin{aligned}
    E[||\r||^2\tr(\Q) - \r^T\Q\r] =& \int (||\r||^2\tr(\Q) - \r^T\Q\r) p(\xi) d\xi \\
    \le &  \int (||\r||^2\tr(\Q) - ||\r||^2\lm p(\xi) d\xi \\
    =& E[||\r||^2](\tr(Q) - \lm)
\end{aligned}
}
The other inequality can be obtained similarly.

This gives us analytical bounds on the stochastic quantity $\oldmu_h$ as we can obtain $E[||\r||^2]$ from the equations for the second moment of the state (Corollary \ref{eq:corrprop}).

\noindent \underline{Evolution of Correlation:}

\eqn{
R_h(t) := E[h^2(\r,\dot{\r})](t)
}

Using eq. \eqref{eq:corrinv}
\eqn{
\begin{aligned}
\dot{R}_h(t) =& E[2(||\r||^2\tr(\Q) - \r^T\Q\r)(||\r||^2 ||\dot{\r}||^2 - (\r \cdot \dot{\r})^2) + 4||\r||^4\dot{\r}^T\Q\dot{\r} \\  &+ 4(\r \cdot \dot{\r})^2\r^T\Q\r -8||\r||^2(\r \cdot \dot{\r})\dot{\r}^T\Q\r]
\end{aligned}
}
which contains higher moments that can be expressed in terms of first and second moments using moment generating function of Gaussian random variable. We will now get analytical bounds on $\dot{R}_h(t)$ also. We will define $\v$ to make this derivation easier since it will recur throughout the derivation. 
\eqn{
\v := ||\r||^2\dot{\r} - (\r \cdot \dot{\r})\r = ||\r||^2(\dot{\r} - (\frac{\r}{||\r||} \cdot \dot{\r})\frac{\r}{||\r||})
}

Notice that $\v$ is just $\dot{\r}$ with the component of $\r$ removed from it. Hence $\v$ can never be parallel to $\r$. If $\dot{\r}$ is parallel to $\r$ then $\v = \bm{0}$. For convenience, define $\{\b_1,\b_2\}$ as unit norm vectors orthogonal to $\r$ such that $\{\b_1,\b_2,\r\}$ forms a basis for $\R^3$. Thus $\v$ can take all values in $\text{span}\{{\b_1,\b_2}\}$ but not in entire $\R^3$. The following computations will cast $\dot{R}_h$ in terms of $\v$ and $\r$. 

\eqn{
||\v||^2 = ||\r||^4||\dot{\r}||^2 - ||\r||^2(\r \cdot \dot{\r})^2
}

\eqn{
||\r||^4\dot{\r}^T\Q\dot{\r} + (\r \cdot \dot{\r})^2\r^T\Q\r -2||\r||^2(\r \cdot \dot{\r})\dot{\r}^T\Q\r = \v^TQ\v
}

\eqn{
\dot{R}_h = 2E[\tr(\Q)||\v||^2 - \frac{\r^T \Q \r}{||\r||^2}||\v||^2 + 2\v^T \Q \v]
\label{eq:Rhdot}
}

Let us consider two cases: (i) $\lm= \lM \iff \Q = p\bm{I}_{3 \times 3}$ for some $p \in \mathbb{R}$ (ii) $\lm \neq \lM$.

For case (i), 

\eqn{
\dot{R}_h = 24pE[||\v||^2]
\label{eq:Rhdot1}
}

For case (ii), 

Let $\dot{\r} = \theta \r + \theta_1\b_1 + \theta_2\b_2$ for some $\theta,\theta_1,\theta_2 \in \R$

From definition, $\v = ||\r||^2(\theta\r + \theta_1\b_1 + \theta_2\b_2) - \theta(\r \cdot \r)\r = ||\r||^2(\theta_1 \b_1 + \theta_2 \b_2)$. Hence $\v$ takes all values in the $\text{span}\{{\b_1,\b_2}\}$ but not in entire $\R^3$. 

Let us focus on getting the upper bound on $\dot{R}_h$. The procedure for the upper bound will be analogous. We can choose $\r$ such that $\r$ is the eigenvector corresponding to minimum eigenvalue of $\Q$ and $\v$ to be eigenvector corresponding to maximum eigenvalue of $\Q$. 
Now since eigenvectors corresponding to distinct eigenvalues are orthogonal for a symmetric matrix (and the minimum and maximum eigenvalues are distinct from assumption) the eigenvector corresponding to the maximum eigenvalue of $\Q$ will be orthogonal to chosen $\r$ hence $\v$ can take that value. 

\noindent \textit{Outline of the Proof:}
For all $\r,\v$:
\eqn{
\begin{aligned}
\tr(\Q)||\v||^2 - \frac{\r^TQ\r}{||\r||^2}||\v||^2 + 2\v^TQ\v \le & \tr(\Q)||\v||^2 - \lm||\v||^2 + 2\v^T \Q \v \\
\le & \tr(\Q)||\v||^2 - \lm||\v||^2 + 2||\v||^2\lM \qed
\end{aligned}
}

Thus we have,
\eqn{
2(\tr(\Q) + 2\lm - \lM) E[||\v||^2] \le \dot{R}_h \le 2(\tr(\Q) + 2\lM - \lm) E[||\v||^2]
\label{eq:Rhdot2}
}

Notice that $||\v||^2 = h||\r||^2$. This simplifies \cref{eq:Rhdot} to
\eqn{
\dot{R}_h = 2E[h\tr(\Q)||\r||^2 - h\r^T \Q \r + 2\v^T \Q \v]
}

\cref{eq:Rhdot1} to
\eqn{
\dot{R}_h = 24pE[h||\r||^2]
}

\cref{eq:Rhdot2} to
\eqn{
2(\tr(\Q) + 2\lm - \lM) E[h||\r||^2] \le \dot{R}_h \le 2(\tr(\Q) + 2\lM - \lm) E[h||\r||^2]
}

\noindent \underline{Summary:}

The bounds obtained can be summarized as follows: 
\begin{equation*}
    E[||\r||^2](\tr(Q) - \lM) \le \dot{\oldmu}_h \le E[||\r||^2](\tr(Q) - \lm)
\end{equation*}
If $\Q$ is a multiple of identity ($p\bm{I}_{3 \times 3}$), then 
\begin{equation*}
    \dot{R}_h = 24pE[h||\r||^2]
\end{equation*}
and otherwise
\begin{equation*}
    2(\tr(\Q) + 2\lm - \lM)E[h||\r||^2] \le \dot{R}_h \le 2(\tr(\Q) + 2\lM - \lm)E[h||\r||^2]
\end{equation*}

\section{Conclusion}
In this paper, we considered dynamical systems with invariants when perturbed by Gaussian white noise. We first derived how the expectation of any function of the state of the perturbed system evolves with time. We used this to study the temporal evolution of the first two statistical moments of the system's invariants. Two case studies were investigated, first the kinetic energy of a rigid body and the second the square of the norm of the specific angular momentum in the two body problem. In the rigid body case, the propagation of the mean of kinetic energy has a linear evolution with time and covariance has a numerically implementable structure. Numerical simulations were performed and the semi-analytical solutions were compared with Monte Carlo simulations. For the two body problem, bounds were established for the mean and covariance of the angular momentum.
\appendix

\section{Appendix A}


We will use corollary \ref{eq:meanprop} and \ref{eq:corrprop} to derive expressions for $\dot{\mu}$ and $\dot{\Sigma}$ in \cref{eq:rbmeanprop,eq:rbcovprop} respectively. We will do it here for the case when $\J$ is diagonal, as in \cref{rbd1,rbd2,rbd3}. From corollary \ref{eq:meanprop} and \ref{eq:corrprop},
\begin{align}
\dot{\mu} =& E[-\bm{\omega} \times \J \bm{\omega}] \\
\dot{\bm{R}} = \dot{\Sigma} =&  E[-\bm{\omega} (\J^{-1}(\bm{\omega} \times \J \bm{\omega}))^T] + E[-(\J^{-1}(\bm{\omega} \times \J \bm{\omega}))\bm{\omega}^T] + \J^{-1}\Q\J^{-1} \label{eq:Rdot}
\end{align}

Note for convenience that

\begin{align}
    -\J^{-1}(\bm{\omega} \times \J \bm{\omega}) =& \begin{bmatrix} c_1\omega_2\omega_3 \\ c_2 \omega_1\omega_3 \\ c_3\omega_1\omega_2 \end{bmatrix} \\ 
   -\bm{\omega} (\J^{-1}(\bm{\omega} \times \J \bm{\omega}))^T =& \begin{bmatrix} c_1\omega_1\omega_2\omega_3 & c_2\omega_1^2\omega_2 & c_3\omega_1^2\omega_2 \\ c_1\omega_2^2\omega_3 & c_2\omega_1\omega_2\omega_3 & c_3\omega_1\omega_2^2 \\ c_1\omega_2\omega_3^2 & c_2\omega_1\omega_3^2 & c_3\omega_1\omega_2\omega_3 \end{bmatrix}
\end{align}

Define 
\begin{align}
\Omega_1 :=& E[-\bm{\omega} \times \J \bm{\omega}] \\
\Omega_2 :=& E[-\bm{\omega} (\J^{-1}(\bm{\omega} \times \J \bm{\omega}))^T] \Rightarrow E[-(\J^{-1}(\bm{\omega} \times \J \bm{\omega}))\bm{\omega}^T] = \Omega_2^T
\end{align}

Using \cref{eq:covcorr} we establish $E[\omega_i\omega_j] = \oldSigma_{ij} + \oldmu_i\oldmu_j$. Thus we obtain $\Omega_1 = -\J^{-1}\left( \mu \times \J \mu \right) + \J^{-1}{\bm{\sigma}}$ with $\displaystyle \bm{\sigma} := \left[(J_2 - J_3)\oldSigma_{23}~\quad (J_3 - J_1)\oldSigma_{31}~\quad (J_1-J_2)\oldSigma_{12}\right]^T$ effectively arriving at \cref{eq:rbmeanprop}. To calculate $\Omega_2$, the expectation will thus involve third moments of the random variable $\bm{\omega}$.  We will use the moment generating function of the multivariate Gaussian to write the third moments in terms of first and second moments. For example, $E[\omega_1\omega_2\omega_3] = \oldmu_1\oldmu_2\oldmu_3 + \oldmu_1\oldSigma_{23} + \oldmu_2\oldSigma_{13} + \oldmu_3\oldSigma_{12}$ which is obtained by finding $\dfrac{\partial \varphi(\bm{t})}{\partial t_1 \partial t_2 \partial t_3}  \Big{|}_{\bm{t = 0}}$ where $\bm{t} = [t_1,t_2,t_3]$. Hence we have


{
\eqn{
    \Omega_2 =\Omega_3 \begin{bmatrix} c_1 & 0 & 0 \\ 0 & c_2 & 0 \\ 0 & 0 & c_3\end{bmatrix}
}
}

with $C:= (\oldmu_1\oldmu_2\oldmu_3 + \oldmu_1\oldSigma_{23} + \oldmu_2\oldSigma_{13} + \oldmu_3\oldSigma_{12})$ and

{\small
\eqn{
    {\Omega}_3 := \begin{bmatrix} C & (\oldmu_1^2\oldmu_3 + 2\oldmu_1\oldSigma_{13} + \oldmu_3\oldSigma_{11})  & (\oldmu_1^2\oldmu_2 + 2\oldmu_1\oldSigma_{12} + \oldmu_3\oldSigma_{11}) \\ 
    (\oldmu_2^2\oldmu_3 + 2\oldmu_2\oldSigma_{23} + \oldmu_3\oldSigma_{22}) &  C & (\oldmu_1\oldmu_2^2 + 2\oldmu_2\oldSigma_{12} + \oldmu_1\oldSigma_{22}) \\
    (\oldmu_2\oldmu_3^2 + 2\oldmu_3\oldSigma_{23} + \oldmu_2\oldSigma_{33}) & (\oldmu_1\oldmu_3^2 + 2\oldmu_3\oldSigma_{13} + \oldmu_1\oldSigma_{33}) & C\end{bmatrix}
}
}


Substituting in \cref{eq:Rdot} we obtain
\eqn{
\dot{\bm{R}} = \Omega_2 + \Omega_2^T + \J^{-1}\Q\J^{-1}
\label{eq:Rdot2}
}

We will now use \cref{eq:covcorrder} to arrive at $\dot{\Sigma}$. Note for convenience that
\eqn{
\mu\dot{\mu}^T = \begin{bmatrix} c_1(\oldmu_1\oldSigma_{23} + \oldmu_1\oldmu_2\oldmu_3) & c_2(\oldmu_1\oldSigma_{13} + \oldmu_3\oldmu_1^2) & c_3(\oldmu_1\oldSigma_{12} + \oldmu_1^2\oldmu_2) \\ c_1(\oldmu_2\oldSigma_{23} + \oldmu_2^2\oldmu_3) & c_2(\oldmu_2\oldSigma_{13} + \oldmu_1\oldmu_2\oldmu_3) & c_3(\oldmu_2\oldSigma_{12} + \oldmu_1\oldmu_2^2) \\ c_1(\oldmu_3\oldSigma_{23} + \oldmu_2\oldmu_3^2) & c_2(\oldmu_3\oldSigma_{13} + \oldmu_1\oldmu_3^2) & c_3(\oldmu_3\oldSigma_{12} + \oldmu_1\oldmu_2\oldmu_3) \end{bmatrix}
\label{eq:mumudot}
}

Define $\displaystyle \bm{A} := \frac{\partial}{\partial \alpha}\f(t,\alpha)\bigg\rvert_{\alpha = \mu}$. When calculated explicitly it evaluates to
\eqn{
\bm{A} = \begin{bmatrix} 0 & c_1\oldmu_3 & c_1\oldmu_2 \\ c_2\oldmu_3 & 0 & c_2\oldmu_1 \\ c_3\oldmu_2 & c_3\oldmu_1 & 0 \end{bmatrix}
\label{eq:A}
}

It remains to be noticed that substituting \cref{eq:mumudot,eq:Rdot2} in \cref{eq:covcorrder} and using \cref{eq:A} yields \cref{eq:rbcovprop}. 

\section{APPENDIX B}

Note for convenience that $\G = \J^{-1}$, $\HU = \frac{\J}{2}$, $U = \frac{1}{2}\bm{\omega}^T\J\bm{\omega}$, $\Us = \J\bm{\omega}$. We make these substitutions in eq. \eqref{eq:meaninv}, \eqref{covinv}. 

\eqn{
\begin{aligned}
\dot{\mu}_K =& \int  (\frac{\J}{2}\J^{-1}\Q\J^{-1}) p_{\S}d\zeta \\
=& \frac{1}{2}\tr(QJ^{-1})
\end{aligned}
}

\eqn{
\begin{aligned}
\dot{\Sigma}_K =& \int(2(\frac{1}{2}\bm{\omega}^T\J\bm{\omega} - \oldmu_K)\tr(\Q\J^{-1}\frac{\J}{2}\J^{-1}) + \tr(\bm{\omega}^T\J^T\J^{-1}\Q\J^{-1}\J\bm{\omega}))p_{\S} d\zeta \\
=& \oldmu_K - \oldmu_K + \int(\tr(\bm{\omega}^T\Q\bm{\omega}))p_{\S} d\zeta) \\
=& \tr(\Q\corr(\bm{\omega})) \\
=& \tr(\Q(\Sigma + \mu\mu^T))
\end{aligned}
}
\bibliographystyle{AAS_publication}   
\bibliography{references}   
\end{document}